\newcommand{\keywords}[1]{\par\addvspace\baselineskip
\noindent\keywordname\enspace\ignorespaces#1}
\begin{document}

\mainmatter  

\title{High Degree Vertices and Spread of Infections in Spatially Modelled Social Networks}

\titlerunning{High Degree Vertices and Spread of Infections}

\author{Joshua Feldman%
\and Jeannette Janssen}
%

\institute{Dept.~of Mathematics and Statistics, Dalhousie Un.\\
Halifax, NS, Canada\\
\mailsa\\}

\maketitle

\begin{abstract}
We examine how the behaviour of high degree vertices in a network affects whether an infection spreads through communities or jumps between them.
We study two stochastic susceptible-infected-recovered (SIR) processes and represent our network with a spatial preferential attachment (SPA) network. In one of the two epidemic scenarios we adjust the contagiousness of high degree vertices so that they are less contagious. We show that, for this scenario, the infection travels through communities rather than jumps between them. We conjecture that this is not the case in the other scenario, when contagion is independent of the degree of the originating vertex. Our theoretical results and conjecture are supported by simulations.

\keywords{Spatial graph model, preferential attachment, infection in networks, contact process.}
\end{abstract}

\section{Introduction}
While community structure plays an important role in the spread of infections \cite{Salathe2010}, there are few analytic results on the topic and it is unclear precisely how clustering interacts with other network properties. Part of the difficulties in this area stem from the notion that communities consist of disjoint groups or small cycles. Recently, however, many have taken an approach to studying community structure by embedding vertices in a metric space \cite{Aiello,Flaxman2006,Hoff2002,Jacob2015,Newman2015}. One can interpret the metric space as a feature space in which nearby vertices have more affinity than the vertices at a distance and, accordingly, close vertices have a higher probability of being connected. Since a community is a group of individuals who share some similarities, we represent communities as geometric clusters. The use of spatial networks allows for a more nuanced notion of community where groups can overlap and boundaries are fuzzy. Not only is this approach more realistic, but it is also easier to analyze. We will exploit the mathematical tractability of a spatial model to study the interaction between community structure and the spread of infections.
\par
Specifically, our focus will be how the behaviour of high degree nodes changes whether an infection spreads through communities or jumps between them. This question is important because understanding who spreads diseases between communities can help guide interventions. For example, as \cite{Salathe2010} show, vaccinating hosts who bridge communities can be more effective than vaccinating highly connected individuals. If high degree vertices connect communities, then these two strategies amount to the same thing. 
\par
We model our infection with a stochastic susceptible-infected-recovered (SIR) process in discrete time. Susceptible vertices can be infected in the future, infected vertices are currently sick, and recovered vertices have gained immunity from a previous infection. To control the behaviour of high degree vertices, we recognize that infections spread through contacts (i.e. sexual contact, airborn contact, etc.), but a network edge only refers to the $\textit{potential}$ for contacts. \cite{Nordvik2006} demonstrate that the number of contacts made with a neighbour has a significant effect on epidemic dynamics. We consider two scenarios in which the ``popular" vertices behave differently: (A) high degree and low degree vertices make the same number of total contacts per time step, so highly connected vertices make fewer contacts with any individual neighbour and (B) the time vertices spend with all their neighbours per time step is proportionate to their degree, so each vertex  gets an equal amount of time with any individual neighbours. In scenario A, high degree nodes have many weak connections and, in scenario B, high degree and low degree vertices have connections of equal strength so high degree vertices should pass on the disease to more neighbours. We note that contacts are not reciprocal---two vertices can make a different number of contacts with one another. To model the relationship between contacts per neighbour per time step and the probability of infecting a susceptible neighbour in a time step, we use an STI model developed by Garnett and Anderson in \cite{Garnett1996}.
\par 
To model our network, we use the Spatial Preferential Attachment (SPA) Model proposed in \cite{Aiello}. These networks are sparse power-law graphs with positive clustering coefficients \cite{Aiello}. It has been shown that the SPA model fits real-life social networks well \cite{Janssen2012}. Since we are working with a stochastic process on a random network, we modify the SPA model to remove some randomness. We show that, compared to the SPA model, vertices in our modified networks have the same expected degree and the overall degree distribution has the same power-law coefficient.
\par
Using techniques developed by \cite{Janssen2015}, we will show that when popular high degree nodes are less infectious (scenario A), the infection will travel slowly through the metric space and respect the community structure. We also conjecture that in scenario B, the infection will take long jumps between communities. To support our result and conjecture, we run simulations on the modified SPA model. While there are numerical studies exploring the relationship between the behaviour of highly connected individuals and the spread of disease with respect to community structure \cite{Salathe2010}, we believe these are the first analytic results on the topic.
\par
Our research presents new strategies for understanding communities in networks. Networks generated by the modified SPA model display many properties of real-world systems, and are more tractable than those generated by the original SPA model. More generally, by representing communities with a continuous feature space rather than with disjoint subsets, we can easily leverage geometric properties to prove otherwise difficult results. The work presented here develops techniques for understanding this geometric conception of community structure in networks.

\section{Definitions}

Here we present the definitions of the SPA model, the modified SPA model, and our model of the infection.

\subsection{The SPA Model}

The SPA model was first proposed in \cite{Aiello}.  It is a spatial digraph model where vertices are embedded in a metric space. The metric space represents the feature space, which reflects the attributes of the vertices that determine their linking patterns. The model indirectly incorporates the principle of {\sl preferential attachment}, first proposed by Barabasi and Albert (BA) \cite{Barabasi1999}, through the notion of {\sl spheres of influence} around every vertex.
\par
Vertices with a larger in-degree have a sphere of influence with greater volume, but as time progresses the spheres of influence of all nodes decrease. In the BA model, the preferential attachment came from a probability of a newcomer connecting to the old vertices. In the SPA model, we use the sphere of influence to incorporate the preferential attachment process. If a newcomer falls within an older vertex's sphere of influence, we connect them.
\par
Specifically, vertices are embedded in a hypercube $C$ of dimension $d$ with unit volume. We endow the  hypercube with the torus metric derived from any of the $L_p$ norms. The torus metric is used to avoid edge effects. If $x$ and $y$ are two vertices in $C$, the distance between them is given by:

\begin{equation*}
	d(x, y) = \min \{ \|x - y + u\|_p: u \in \{-1, 0, 1\}^m\}
\end{equation*}

The SPA model has parameters $A_1 \in (0,1)$ and $A_2 \in [0,\infty)$. (The original SPA model also has a parameter $p$ representing the conditional probability that a vertex which falls into the sphere of influence of vertex, actually links to that vertex. We here assume this to be 1.)
\par
The model consists of a stochastic sequence of $n$ graphs  $\{G_t=(V_t,E_t)\}_{0\leq t\leq n}$ with $V_t \subset C$. The index $t$ is interpreted as the $t$-th time step. At each time $t$, the sphere of influence $S(v,t)$ of a vertex $v\in V_t$ is the ball centered at $v$ with volume
\begin{equation}
\label{def:S(v,t)}
	|S(v,t)| =  \min\left\{\frac {A_1\deg^{-}(v,t) + A_2}{t},1\right\}
\end{equation}

Let $G_0$ be the null graph. Given $G_{t-1}$, we define $G_t = (V_t,E_t)$ as follows. $V_t=V_{t-1} \cup \{v_t\}$ where $v_t$ is placed uniformly at random in $C$. The edge set $E_t=E_{t-1} \cup \{(v_t,u)\mid v_t\in S(u,t)\}$.
\par
We now review the relevant properties of the SPA model. As shown in \cite{Aiello}, the network has a power law degree distribution, with an exponent of $1+\frac{1}{A_1}$. The geometric nature of the network implies that there is a high amount of local clustering \cite{Janssen2013}. In \cite{Cooper2014}, logarithmic bounds on the directed diameter were given. In \cite{Janssen2015} it was shown that the effective undirected diameter is also logarithmically bounded.

\subsection{The Modified SPA Model}

While the SPA model generates spatial graphs that fit empirical networks well, we must modify the model to make it mathematically tractable for our purposes. Since the behaviour of a vertex during its early life significantly determines its late time behaviour, working with the preferential attachment model can be difficult. Instead, we modify the sphere of influence to depend upon the deterministic expected in-degree, instead of the stochastic actual in-degree. First we present a theorem from \cite{Janssen2013} on the expected in-degree of a vertex in the SPA model.

\begin{theorem}
Let $\omega=\omega(t)$ be any function tending to infinity together with $t$. The expected in-degree at time $t$ of a vertex $v_i$ born at time $i \geq \omega$ is given by
\[
\mathbb{E}(\deg^{-}(v_i,t))=(1+o(1))\frac{A_2}{A_1}\left(\frac{t}{i}\right)^{A_1}-\frac{A_2}{A_1}
\]
\end{theorem}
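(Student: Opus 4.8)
The plan is to reduce the statement to solving a single first-order linear recurrence for the expected in-degree. The key observation is that the newcomer $v_t$ is placed uniformly at random in the unit-volume cube $C$, so the probability that $v_t$ links to an already-present vertex $v_i$ is \emph{exactly} the volume $|S(v_i,t)|$ of its sphere of influence. Writing $d_i(t)=\deg^{-}(v_i,t)$, interpreting the degree in the sphere formula as the in-degree just before step $t$, and momentarily ignoring the cap imposed by $\min\{\cdot,1\}$, the increment at step $t$ is a Bernoulli variable, so

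\begin{equation*}
\mathbb{E}[d_i(t)\mid d_i(t-1)] = d_i(t-1) + \frac{A_1\,d_i(t-1)+A_2}{t}.
\end{equation*}

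Setting $D_t=\mathbb{E}[d_i(t)]$ and taking expectations yields the deterministic recurrence $D_t=(1+A_1/t)D_{t-1}+A_2/t$ with $D_i=0$.

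First I would solve this recurrence in closed form. A direct substitution shows that the constant sequence $-A_2/A_1$ is a particular solution, so the shifted sequence $X_t:=D_t+A_2/A_1$ obeys the homogeneous relation $X_t=(1+A_1/t)X_{t-1}$ with $X_i=A_2/A_1$. Hence

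\begin{equation*}
D_t = \frac{A_2}{A_1}\prod_{s=i+1}^{t}\left(1+\frac{A_1}{s}\right) - \frac{A_2}{A_1}.
\end{equation*}

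Next I would estimate the product: taking logarithms and using $\ln(1+x)=x+O(x^2)$ gives $\ln\prod_{s=i+1}^{t}(1+A_1/s)=A_1\ln(t/i)+O(1/i)$, and since $i\geq\omega\to\infty$ the error is $o(1)$, so the product equals $(1+o(1))(t/i)^{A_1}$, which is precisely the claimed formula.

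The main obstacle, and the reason the hypothesis $i\geq\omega$ is needed, is justifying that the $\min\{\cdot,1\}$ cap may be dropped. Substituting the target estimate, the sphere volume of $v_i$ is of order $A_2\,t^{A_1-1}i^{-A_1}$, which tends to $0$ because $A_1<1$; thus for a vertex born late enough the volume never realistically reaches $1$. Making this rigorous requires controlling the \emph{actual} random degree rather than only its mean, since a rare large deviation could activate the cap. I would therefore supply a concentration (or inductive sandwiching) argument showing that the event that any sphere volume hits $1$ contributes only an $o(1)$ additive error to $D_t$, so that the capped process and the linear recurrence above agree to the stated precision. The algebra of the recurrence is routine; it is this truncation control that carries the real work and explains the appearance of the lower bound $\omega$ on the birth time.
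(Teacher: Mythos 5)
Your outline is correct, but note that the paper does not actually prove this statement: Theorem~1 is imported verbatim from the cited reference on the SPA model, and the only in-paper proof of this type is for Theorem~2, the analogue for the \emph{modified} model. There the sphere volumes are deterministic, so the expected in-degree is just a sum $\sum_{k>i}|S(v_i,k)|$ approximated by an integral --- no recurrence is needed. Your argument is the standard one for the original (stochastic-degree) model and is sound: the conditional increment gives $D_t=(1+A_1/t)D_{t-1}+A_2/t$, the particular solution $-A_2/A_1$ reduces this to a product, and $\sum_{s=i+1}^{t}\ln(1+A_1/s)=A_1\ln(t/i)+O(1/i)=A_1\ln(t/i)+o(1)$ since $i\ge\omega\to\infty$, yielding exactly the claimed asymptotics. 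The one place where you misjudge the difficulty is the truncation by $\min\{\cdot,1\}$: no concentration or large-deviation control is needed, because $v_i$ can gain at most one in-neighbour per time step, so deterministically $\deg^{-}(v_i,t-1)\le t-1-i$, and hence
\[
A_1\deg^{-}(v_i,t-1)+A_2\;\le\;A_1(t-1-i)+A_2\;<\;t
\]
whenever $i> A_2-A_1$, which holds for all $i\ge\omega$ once $\omega$ exceeds this constant. So the cap is never active for the vertices covered by the theorem, and the linear recurrence is exact rather than an approximation; the "real work" you defer is in fact a one-line observation. With that substitution your sketch is a complete and correct proof.
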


A vertex's birth time $i$ and the size of the overall network $t$ determines its expected in-degree. Furthermore, if the expected degree is larger than $\log^2 n$ then the real in-degree is concentrated around its expected in-degree \cite{Janssen2013}. Hence, the time a vertex is born can be used as a proxy for its degree, with old nodes being more highly connected than young nodes. 
\par
We modify the SPA model by redefining the sphere of influence to depend on the vertex's expected in-degree, instead of its actual in-degree. This substitution makes the size of the sphere of influence a nonrandom object. Precisely, the $\emph{modified spatial preferential attachment}$ model is defined as the SPA model, with the one difference being that the size of the sphere of influence of vertex $v_i$ at time is changed to: 

\begin{equation}
	|S(v_i,t)| =  \min\left\{\frac{A_2}{t^{1-A_1}i^{A_1}},1\right\}	
\end{equation}

We derive equation (2) by replacing the actual in-degree in formula (1) for the original sphere of influence with the expected in-degree and simplifying. We now state and prove a theorem which shows that the modified SPA model generates networks with an expected in-degree equivalent to the original model.

\begin{theorem}
Let $G_n$ be a graph generated by the modified SPA model with $n$ vertices. The expected in-degree of a vertex $v_i$ born at time $i$ is given by
\[
\mathbb{E}(\deg^{-}(v_i,n))=\frac{A_2}{A_1}\left(\left(\frac{n}{i}\right)^{A_1}-1\right)-\epsilon
\]
where $|\epsilon|<\frac{A_2}{A_1}$.
\end{theorem}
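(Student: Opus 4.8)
The plan is to compute the expectation directly as a sum of edge-formation probabilities, recognize that sum as a Riemann-type approximation of a power integral, and then control the discrepancy between the sum and the integral using monotonicity of the integrand.

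First I would observe that, since each new vertex $v_t$ is placed uniformly at random in the unit-volume cube $C$, and the edge $(v_t,v_i)$ is created exactly when $v_t\in S(v_i,t)$, the probability that $v_i$ gains an in-edge at step $t>i$ is precisely $|S(v_i,t)|$. By linearity of expectation,
\begin{equation*}
\mathbb{E}(\deg^{-}(v_i,n))=\sum_{t=i+1}^{n}|S(v_i,t)|=\frac{A_2}{i^{A_1}}\sum_{t=i+1}^{n}t^{A_1-1},
\end{equation*}
where I temporarily drop the truncation at $1$ in the definition of $|S(v_i,t)|$; I would verify separately that in the regime of interest the denominator $t^{1-A_1}i^{A_1}$ exceeds $A_2$, so that the minimum is never attained and the clean formula applies.

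Next I would compare $\sum_{t=i+1}^{n}t^{A_1-1}$ with the integral $\int_i^n x^{A_1-1}\,dx=\tfrac{1}{A_1}(n^{A_1}-i^{A_1})$, which, after multiplying by $A_2/i^{A_1}$, yields exactly the stated main term $\frac{A_2}{A_1}\big((n/i)^{A_1}-1\big)$. Because $A_1\in(0,1)$, the integrand $f(x)=x^{A_1-1}$ is decreasing, so $f(t)\le\int_{t-1}^{t}f(x)\,dx\le f(t-1)$ for each $t$. Summing the left inequality over $t=i+1,\dots,n$ shows the sum is at most the integral, which forces $\epsilon\ge 0$; summing the gaps and telescoping gives
\begin{equation*}
0\le\int_i^n f(x)\,dx-\sum_{t=i+1}^{n}f(t)\le\sum_{t=i+1}^{n}\big(f(t-1)-f(t)\big)=f(i)-f(n)<i^{A_1-1}.
\end{equation*}

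Finally, multiplying this discrepancy by $A_2/i^{A_1}$ identifies the error as $\epsilon\in\big[0,\,A_2/i\big]\subseteq[0,A_2)$, and since $A_1<1$ gives $A_2<A_2/A_1$, the claimed bound $|\epsilon|<A_2/A_1$ follows. I expect the main obstacle here to be bookkeeping rather than depth: pinning down the sign of $\epsilon$ from the monotonicity comparison, and disposing of the $\min\{\cdot,1\}$ truncation (that is, arguing that the early steps where the sphere could fill the whole cube are excluded by the hypotheses or contribute negligibly). The integral–sum estimate is the only genuinely quantitative step, and it becomes routine once $f$ is recognized as monotone.
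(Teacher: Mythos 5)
Your proposal is correct and follows essentially the same route as the paper: linearity of expectation turns the expected in-degree into $\sum_{t=i+1}^{n}|S(v_i,t)|$, which is then compared to the integral $\int_i^n \frac{A_2}{x^{1-A_1}i^{A_1}}\,dx$. The only difference is cosmetic — you bound the error by telescoping $f(t-1)-f(t)$ term by term, whereas the paper sandwiches the sum between $\int_{i+1}^{n+1}$ and $\int_i^n$ and bounds their difference; both yield $0\le\epsilon<\frac{A_2}{A_1}$, and both treat the $\min\{\cdot,1\}$ truncation with the same light touch.
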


\begin{proof}
Let $v_j$ be a vertex born at time $j>i$. Let $X_j$ be a random variable that equals $1$ if there is an edge from $v_j$ to $v_i$ and equals $0$ otherwise. By the definition of the modified SPA model, we place an edge from $v_j$ to $v_i$ if and only if $v_j$ falls within the sphere of influence of $v_i$. Since $v_j$ is placed in the hypercube uniformly at random, $X_j=1$ with probability equal to the volume of $v_i$'s sphere of influence at time $j$. 
\par
By the linearity of expectation,
\begin{align*}
	\mathbb{E}(\deg^{-}(v_i,n)) &=  \mathbb{E}\left(\sum_{k=i+1}^{n} X_k\right) \\
			      &=\sum_{k=i+1}^{n}  \mathbb{E}\left(X_k\right) \\
			      &=\sum_{k=i+1}^{n}  |S(v_i,k)| \\
\end{align*}
We approximate this sum with an integral.
\begin{align*}
	\sum_{k=i+1}^{n}  |S(v_i,k)| &= \int_{i}^{n} \frac{A_2}{x^{1-A_1}i^{A_1}} dx -\epsilon \\
						&= \frac{A_2}{A_1}\left(\left(\frac{n}{i}\right)^{A_1}-1\right) -\epsilon
\end{align*}

To bound the error, we first recognize that,
\[
	\int_{i+1}^{n+1} \frac{A_2}{x^{1-A_1}i^{A_1}} dx \leq \sum_{k=i+1}^{n}  |S(v_i,k)| \leq \int_{i}^{n} \frac{A_2}{x^{1-A_1}i^{A_1}} dx
\]

Hence,
\begin{align*}
	|\epsilon| &< \int_{i}^{n} \frac{A_2}{x^{1-A_1}i^{A_1}} dx - \int_{i+1}^{n+1} \frac{A_2}{x^{1-A_1}i^{A_1}} dx \\
			&= \frac{A_2}{i^{A_1}}\frac{((i+1)^{A_1}-i^{A_1})-((n+1)^{A_1}-n^{A_1})}{A_1} \\
			&< \frac{A_2}{i^{A_1}}\frac{(i+1)^{A_1}-i^{A_1}}{A_1} \\
			&< \frac{A_2}{A_1}\left(2^{A_1}-1\right) \\
			&< \frac{A_2}{A_1}
\end{align*}
\qed
\end{proof}

In addition to having equivalent expected in-degrees, we also derive that both models lead to the same (power law) cumulative in-degree distribution. The cumulative in-degree distribution $c_k$ is defined as the number of vertices with in-degree $j \leq k$ divided by the total number of vertices. As stated above, from \cite{Aiello} we know that a.a.s. the cumulative in-degree distribution of networks generated by the SPA model follows a power law with exponent $\frac{1}{A_1}$. Theorem 3 states that the same is true of networks generated by the modified SPA model. An event occurs \textit{with extreme probability (w.e.p.)} if it occurs with probability at least $1-e^{-\Theta(\log^2n)}$ as $n \rightarrow \infty$.

\begin{theorem}
Let $G_n$ be a graph generated by the modified SPA model with $n$ vertices. The cumulative in-degree distribution $c_k$ of $G_n$ is w.e.p. a power law with exponent $\frac{1}{A_1}$ for $k > k' = \log^2(n)$.
\end{theorem}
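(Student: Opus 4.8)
The plan is to exploit the key structural simplification of the modified SPA model: because the size of each sphere of influence is now deterministic, the in-degree of a fixed vertex is a sum of \emph{independent} Bernoulli indicators. Concretely, for a vertex $v_i$ the indicators $X_j$ appearing in the proof of Theorem 2 are mutually independent across $j$ (whether $v_j$ lands in $S(v_i,j)$ depends only on the independently placed position of $v_j$), each with success probability $|S(v_i,j)|$. Hence $\deg^-(v_i,n)=\sum_{j>i}X_j$ is exactly a sum of independent $0/1$ variables with mean given by Theorem 2, and a Chernoff bound applies directly. This is the chief advantage over the original model, where the random sphere sizes create dependencies.

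First I would invert the expected-degree formula of Theorem 2 to express the critical birth time as a function of $k$. Setting $\mathbb{E}(\deg^-(v_i,n))=k$ and dropping the additive $O(1)$ terms (negligible once $k>\log^2 n$) gives $i_k=(1+o(1))\,n\,(A_2/(A_1k))^{1/A_1}$. Since the expected degree is strictly decreasing in the birth time, a vertex has expected degree at least $k$ precisely when it is born before $i_k$, so the expected fraction of vertices with degree at least $k$ is $(1+o(1))(A_2/A_1)^{1/A_1}\,k^{-1/A_1}$, which exhibits the claimed tail exponent $1/A_1$.

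Next I would upgrade this statement about expected degrees to one about actual degrees via concentration. For any $\delta>0$, the Chernoff bound gives $\Pr(|\deg^-(v_i,n)-\mu_i|>\delta\mu_i)\le 2e^{-\delta^2\mu_i/3}$, where $\mu_i=\mathbb{E}(\deg^-(v_i,n))$. Because we only consider $k>k'=\log^2 n$, every vertex in the relevant range has $\mu_i=\Omega(\log^2 n)$, so each bound fails with probability $e^{-\Theta(\log^2 n)}$. Taking a union bound over the at most $n$ vertices involved, and using $n\cdot e^{-\Theta(\log^2 n)}=e^{\log n-\Theta(\log^2 n)}=e^{-\Theta(\log^2 n)}$, the event that \emph{every} relevant vertex has actual degree within a $(1\pm\delta)$ factor of its mean holds w.e.p. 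On this event the count of vertices with degree at least $k$ is trapped between $i_{k/(1-\delta)}$ and $i_{k/(1+\delta)}$, each equal to $(1+o(1))\,i_k$ when $\delta$ is allowed to tend to $0$ slowly.

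The delicate point is matching the two error scales. Near the lower end $k\approx\log^2 n$ the mean $\mu_i$ is itself only of order $\log^2 n$, so to keep the failure probability at $e^{-\Theta(\log^2 n)}$ one can afford only $\delta=\Theta(1)$ there, whereas for $k\gg\log^2 n$ one may send $\delta\to 0$ and recover the sharp leading constant. The main obstacle is therefore the bookkeeping of the ``gray zone'' of vertices whose birth time lies between $i_{k/(1+\delta)}$ and $i_{k/(1-\delta)}$: I must check that this band contributes only a lower-order correction, so that the exponent $1/A_1$ survives, and that the $O(1)$ additive terms dropped from Theorem 2 are genuinely negligible throughout the regime $k>\log^2 n$. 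Controlling these two approximation errors uniformly over the whole range is where the real care is needed.
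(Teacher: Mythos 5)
Your proposal follows essentially the same route as the paper: both exploit the independence of the Bernoulli indicators in the modified model, apply a Chernoff bound to concentrate each in-degree around its mean (the paper uses an additive error $\epsilon=\sqrt{f(i)}\log n$ where you use a multiplicative $(1\pm\delta)$ window), and then invert the expected-degree formula of Theorem 2 to count the vertices born before the critical time, obtaining $c_k=(1+o(1))f^{-1}(k)/n$ with tail exponent $1/A_1$. The ``gray zone'' bookkeeping you flag near $k\approx\log^2 n$ is the same subtlety the paper passes over when asserting $f^{-1}(k\pm\epsilon)=(1+o(1))f^{-1}(k)$, so your treatment is, if anything, more candid about where the care is needed.
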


\begin{proof}
The in-degree of a vertex $v_i$ born at time $i$ is the sum of $n-i$ independent Bernoulli variables $X_j$ with success probability equal to the volume of the sphere of influence $|S(v_i,j)|$. We let $f(i)$ equal the expected in-degree of a vertex born at time $i$ in a network of size $n$. By the generalized Chernoff bound \cite{Lu2006}, we know that w.e.p. $f(i) - \epsilon \leq \deg^{-}(v_i,n) \leq f(i) + \epsilon$ where $\epsilon=\sqrt{f(i)}\log n$.
\par
Using this bound, we determine how many vertices have in-degree greater than $k$. If a vertex is born before $i^-=f^{-1}(k+\epsilon)$, then w.e.p. it has a degree greater than $k$. Likewise, if a vertex is born after $i^+=f^{-1}(k-\epsilon)$, then w.e.p. it has a degree less than $k$. Hence, the number of vertices with degree greater than $k$ is between $i^-$ and $i^+$. By inverting the formula for expected degree and examining its asymptotic growth, we find
\begin{equation*}
i^-=f^{-1}(k+\epsilon)=(1+o(1))f^{-1}(k)
\qquad
i^+=f^{-1}(k-\epsilon)=(1+o(1))f^{-1}(k)
\end{equation*}
Hence, the number of vertices with degree greater than $k$ is w.e.p. $(1+o(1))f^{-1}(k)$. This implies that $c_k=(1+o(1))(kA_1/A_2-1)^{-1/A_1}$.
\qed
\end{proof}

\subsection{Infectious Processes}

Now that we have a workable model of real-world networks, we define a SIR disease model in discrete time. To begin the infectious process, we pick a node at random to be the origin node. At time $t=0$, we infect the origin node and denote all other nodes as susceptible. In each time step, the infected nodes infect each neighbour with probability $\beta$. Though the modified SPA model generates directed graphs, we ignore the orientation of the edges. If a susceptible vertex has multiple infected neighbours, they each independently attempt to infect the susceptible vertex. At the end of each time step, all infected nodes recover. This is a simplification of the typical SIR model because usually the recovery time is modelled as a stochastic variable. Here we simplify and assume each vertex to recover in exactly one time step. We run the process until no vertices are infected. If we run the infection process for $t$ time steps on a network with $n$ vertices, the infected and recovered vertices together with the edges taken by the infection (oriented in the direction the infection travelled) form an an acyclic directed subgraph of the network. We denote this subgraph, $I_n^t$, the \textit{infection graph at time $t$}.

\par
Suppose that in a given time step vertex $v$ is infected, vertex $u$ is susceptible, and they are neighbours. The probability $\beta$ of $v$ infecting $u$ in the time step depends on $\kappa(v)$, the average number of contacts $v$ makes with $u$ per time step, and the probability of transmitting the infection per contact $\tau$. If $v$ makes more contacts on average with $u$ or if the probability of infection per contact is higher, the disease will have a greater chance of spreading. Following \cite{Garnett1996}, we set 
\[
\beta=1-e^{-\tau\kappa}
\]

\par
To study how the behaviour of high degree nodes changes how a disease spreads through the network, we consider two different scenarios: scenario A and scenario B with their own infection probabilities $\beta_A$ and $\beta_B$, respectively. In scenario A, we define the average number of contacts a vertex $v$ makes with a neighbour per time step as
\[
\kappa_A(v)=\frac{T}{\mathbb{E}(\deg^{-}(v))}
\]
where $T$ is the average number of contacts $v$ makes with \textit{all} its neighbours in the time step. We use $\mathbb{E}(\deg^{-}(v))$ as a rough approximation of the degree of $v$. Hence, in scenario A, $\beta_A(v)=1-e^{-\tau\kappa_A(v)}$. Since $T$ is constant for all vertices, high and low degree alike, high degree vertices make fewer contacts with any $\textit{single}$ neighbour because their contacts are dispersed over a greater number vertices. While high degree nodes have many neighbours, these connections may be weaker than a node with few neighbours. 
\par
In scenario B, however, we no longer keep the average number of contacts a vertex makes with all its neighbours in a time step constant. Instead, we let $T(v)$ depend on $T$ and the expected in-degree of $v$. We define 
\[
T(v)=T\frac{\mathbb{E}(\deg^{-}(v))}{\langle\deg^-\rangle}
\]
where $\langle\deg^-\rangle$ is the average degree in the graph. From \cite{Aiello}, we know a.a.s. that $\langle\deg^-\rangle=(1+o(1))\frac{A_2}{1-A_1}$ in the SPA model, which is asymptotically constant. Since the total number of edges in the network is the sum of Bernoulli variables, by the linearity of expectation, it is a simple calculation to show that $\langle\deg^-\rangle$ is equivalent in the modified and original SPA models. As in scenario A, we use $\mathbb{E}(\deg^{-}(v))$ to approximate the degree of $v$. From the average number of contacts a vertex makes with \textit{all} its neighbours in a time step, we can define the average number of contacts a vertex makes with a \textit{specific} neighbour in a time step as 
\[
\kappa_B(v) =\frac{T(v)}{\mathbb{E}(\deg(v))}=\frac{T}{\langle\deg^-\rangle}
\]
Hence, in scenario B, $\beta_B=1-e^{-\tau\kappa_B}$, which is constant. Any infected vertex $v$ has an equal chance of infecting a neighbour, regardless of the degree of $v$. Hence, in scenario B, we expect that high degree vertices will pass the infection on to more individuals than low degree vertices.

\section{Spatial spread of infections}

Our main result states that when highly connected vertices are less infectious, the infection will not make large jumps through the metric space. Since our metric space represents a feature space, this means that the infection spreads through communities rather than jumping between them. To prove this result, we treat the infection as percolating through the network. We first show that a.a.s. all vertices in the network will only infect neighbours within a certain distance. From this result, we conclude that any particular infection will a.a.s. be bounded by a ball of a relatively small radius after a given number of time steps.
 
\begin{theorem}
\label{thm:main}
Let $G_n$ be a graph with $n$ vertices generated by the modified SPA model. Let $\lambda=n^{-\phi}$ be such that $\phi < \frac{A_1(1-A_1)}{(A_1+2)d}$. For scenario A, a.a.s. all nodes in the infection graph at time $t$ will be within  $t\lambda$ of the origin node $u$. 
\end{theorem}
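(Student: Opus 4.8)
The plan is to reduce the global statement to a purely local one: it suffices to show that \emph{a.a.s.\ the infection never passes along an edge whose length is at least $\lambda$}. Indeed, a vertex infected at time step $s\le t$ sits at graph-distance $s$ from the origin $u$ in the infection graph, so it is joined to $u$ by an infection path of at most $t$ edges. If every edge on that path has length less than $\lambda$, then since the torus metric satisfies the triangle inequality, the vertex lies within $t\lambda$ of $u$. Thus the whole theorem follows from a bound on single-step spread, applied uniformly over the entire run of the process.

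To control this local event I would use a first moment argument. Let $N$ denote the number of directed edges of length at least $\lambda$ that the infection actually traverses during the whole process, and aim to show $\mathbb{E}(N)=o(1)$; Markov's inequality then gives $\Pr(N\ge 1)=o(1)$, which is exactly the good event above. Conditioning on the graph, a long edge $\{v_i,v_j\}$ is crossed from $v_i$ to $v_j$ only if $v_i$ becomes infected and its single transmission coin succeeds, an event of probability at most $\beta_A(v_i)$; the two orientations together contribute $\beta_A(v_i)+\beta_A(v_j)$. A convenient feature of scenario A is that $\beta_A$ is defined through the \emph{expected} in-degree and is therefore deterministic, so no degree concentration is needed and we may take expectations over the vertex positions directly:
\[
\mathbb{E}(N)\ \le\ \sum_{i<j}\Pr\big(\{v_i,v_j\}\text{ is an edge of length}\ge\lambda\big)\,\big(\beta_A(v_i)+\beta_A(v_j)\big).
\]

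The two ingredients of this sum pull in opposite directions, and it is their product that must be estimated. Geometrically, since $v_j$ is placed uniformly, for $i<j$ the edge $\{v_i,v_j\}$ exists and has length at least $\lambda$ exactly when $v_j$ lands in the annulus between radius $\lambda$ and the radius of $S(v_i,j)$, an event of probability $\big(|S(v_i,j)|-c_d\lambda^d\big)^{+}$ with $c_d$ the volume of the unit ball; this is nonzero only when $j^{1-A_1}i^{A_1}<A_2/(c_d\lambda^d)$, so long edges are incident only to \emph{old} vertices. On the other hand, Theorem 2 gives $\mathbb{E}(\deg^-(v_i,n))=(1+o(1))\tfrac{A_2}{A_1}(n/i)^{A_1}$ in this regime, whence $\beta_A(v_i)\le\tau\kappa_A(v_i)=(1+o(1))\tfrac{\tau T A_1}{A_2}(i/n)^{A_1}$: old vertices are precisely the weakly infectious ones. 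Substituting $|S(v_i,j)|=A_2/(j^{1-A_1}i^{A_1})$ and the bound on $\beta_A$, the summand decays like $n^{-A_1}\,j^{2A_1-1}i^{-A_1}$ (the younger endpoint dominating), confined to the constrained region above.

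I expect the main obstacle to be the evaluation of this doubly indexed sum. The region $\{i<j:\ j^{1-A_1}i^{A_1}<A_2/(c_d\lambda^d)\}$ is awkward: one must split it according to whether the volume constraint or the condition $i<j$ is binding, approximate each piece by an integral, and carefully track the exponents of $\lambda=n^{-\phi}$. It is exactly this exponent bookkeeping that produces the threshold, since the dominant contribution scales as a power of $n^{\phi d}$ and requiring it to be $o(1)$ yields the stated bound $\phi<\frac{A_1(1-A_1)}{(A_1+2)d}$. The remaining points are minor by comparison: verifying that both orientations of every edge are accounted for (the dominant one running from the younger, more infectious endpoint to the older one), and confirming that a single good event suffices simultaneously for all $t$, as $N$ counts every successful long transmission over the whole process.
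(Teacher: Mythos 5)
Your proposal follows essentially the same route as the paper: the triangle-inequality reduction to ``no long edge is ever available to the infection,'' a first-moment/union bound over ordered pairs with the factor $\beta_A(v_i)+\beta_A(v_j)$, the annulus computation $\bigl(|S(v_i,j)|-c_d\lambda^d\bigr)^{+}$ restricting long edges to old vertices, and the observation that in scenario A these old vertices are exactly the weakly infectious ones. The only difference is that the exponent bookkeeping you flag as the main obstacle is sidestepped in the paper by a much cruder bound --- every summand is replaced by $2A_2\beta_A(v_{m_1})$ and the sum by $m_1^2$ terms, with $m_1\sim n^{\phi d/(1-A_1)}$ --- which already produces the stated threshold $\phi<\frac{A_1(1-A_1)}{(A_1+2)d}$; your finer evaluation would in fact give a slightly better exponent.
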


\subsection{Proof of Theorem \ref{thm:main}}

Before we present the proof, we first adopt some conventions regarding the infection process. Instead of considering the infection spreading through a network in time, we $\textit{a priori}$ consider whether any vertex would infect a neighbour given that they are connected. If we ``occupy" each pair of vertices with probability $\beta_A$, and restrict our occupied pairs to edges present in our network, we get a subgraph consisting of where the infection could \textit{possibly} travel, which we call the potential infection graph. The infection graph, describing where the infection \textit{actually} spread, will be a subset of the potential infection graph.
\par
Formally, let $G_n=(V_n,E_n)$ be a network of order $n$ generated by the modified SPA model, where we replace each directed edge by two edges in opposite directions. We consider ordered pairs of vertices $(v_i,v_j)$ and $(v_j,v_i)$ because our infection model ignores the orientation of the edges in the original network generated by the modified SPA model. In other words, even though all edges in the modified SPA model are directed from younger to older vertices, we allow the old to infect the young. Let $u \in V_n$ be the node where the infection originates. With each ordered pair of vertices $(v_i,v_j)$ we associate a Bernoulli random variable $I_{v_i,v_j}$ defined as
\begin{equation}
I_{v_i,v_j}=
\begin{cases} 
      1  
      & \text{with probability } \beta_A(v_i)  \\
      0 
      & \text{otherwise}\\
\end{cases}
\end{equation}
We define the \textit{potential infection graph} on $G_n$ as the graph $I_n=(V_I,E_I)$ where $V_I=V_n$ and $E_I=\{ (v_i,v_j) | \{v_i,v_j\} \in E_n  \text{ and } I_{v_i,v_j}=1\}$.  We can recover the infection graph at time $t$, $I_n^t$, by taking the subgraph induced by the $t^\text{th}$ out-neighbourhood of $u$ in $I_n$.
\par
The proof of our main result is based on an analysis of the edges in the potential infection digraph. Define the \emph{length of an edge} as the distance between its two end points. We first establish a lemma stating that there is an asymptotic bound on the length of edges in the potential infection graph.

\begin{lemma}
\label{lem:NoLongEdges}
Let $G_n$ be a graph with $n$ vertices generated by the modified SPA model and $I_n$ be a potential infection graph on $G_n$ in scenario A. Let $\lambda=n^{-\phi}$ such that $\phi < \frac{A_1(1-A_1)}{(A_1+2)d}$. Then a.a.s. $I_n$ does not contain any edges of length greater than $\lambda$. 
\end{lemma}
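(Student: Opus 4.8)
The plan is to control the potential infection graph $I_n$ by a first-moment (union bound) argument over ordered pairs of vertices, showing that the expected number of edges of length exceeding $\lambda$ tends to $0$. Fix a pair $v_i,v_j$ with $i<j$. For a long edge to appear between them in $I_n$, the following events must all occur: the underlying network edge is present (i.e. $v_j\in S(v_i,j)$, which happens with probability $|S(v_i,j)|$), the edge has length greater than $\lambda$, and at least one of the orientation Bernoullis $I_{v_i,v_j},I_{v_j,v_i}$ fires. Since these Bernoullis are independent of the vertex placements, I would bound the probability of a long edge from this pair by $|S(v_i,j)|\,(\beta_A(v_i)+\beta_A(v_j))$, and then sum over all pairs that can possibly produce a long edge.

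The first key step is a geometric restriction. Because the edge $\{v_i,v_j\}$ exists only when $v_j$ lies in the ball $S(v_i,j)$, its length cannot exceed the radius of that ball. Writing $|S(v_i,j)|=V r^d$, with $V$ the volume of the unit $L_p$-ball in dimension $d$, a length exceeding $\lambda$ forces $r>\lambda$, i.e.\ $V\lambda^d<|S(v_i,j)|=A_2/(j^{1-A_1}i^{A_1})$. Hence only pairs with $j^{1-A_1}i^{A_1}<M:=A_2/(V\lambda^d)=\Theta(n^{\phi d})$ can contribute. Since $i\ge 1$ this forces $j<M^{1/(1-A_1)}=:j_{\max}=\Theta(n^{\phi d/(1-A_1)})$, and as $i<j$ both birth times lie below $j_{\max}$, so there are at most $j_{\max}^2$ relevant pairs. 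The hypothesis on $\phi$ gives $\phi d/(1-A_1)<A_1/(A_1+2)<1$, so $j_{\max}=o(n)$ and every relevant vertex is genuinely ``old''.

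The second key step, where scenario A does the work, is to bound the infectiousness of these old vertices. Using $1-e^{-x}\le x$ I would write $\beta_A(v)\le\tau\kappa_A(v)=\tau T/\mathbb{E}(\deg^-(v))$, and then invoke Theorem 2 to obtain $\mathbb{E}(\deg^-(v_i,n))\ge\frac{A_2}{A_1}\big((n/i)^{A_1}-2\big)$. Since $\beta_A$ increases with birth time (younger vertices have smaller expected degree and hence larger $\kappa_A$), the maximum infectiousness over the relevant vertices is attained near $j_{\max}$, giving $\beta_A(v)\le\beta_{\max}=O\big((j_{\max}/n)^{A_1}\big)=O\big(M^{A_1/(1-A_1)}n^{-A_1}\big)$ uniformly over the relevant range; here $j_{\max}=o(n)$ is exactly what guarantees $(n/i)^{A_1}$ is large enough to absorb the $-2$.

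Combining the two steps and crudely bounding $|S(v_i,j)|\le 1$ and $\beta_A(v_i)+\beta_A(v_j)\le 2\beta_{\max}$, the union bound yields
\[
\Pr(I_n\text{ has a long edge})\;\le\;2\,j_{\max}^2\,\beta_{\max}\;=\;O\!\left(M^{(A_1+2)/(1-A_1)}n^{-A_1}\right)\;=\;O\!\left(n^{\phi d(A_1+2)/(1-A_1)-A_1}\right),
\]
which tends to $0$ precisely when $\phi<\frac{A_1(1-A_1)}{(A_1+2)d}$, matching the stated hypothesis. The main obstacle I anticipate lies not in this final counting but in making the two reductions rigorous: cleanly converting ball volumes to radii (including the degenerate case $|S(v_i,j)|=1$, where the ball is the entire torus), and verifying that the degree estimate and the monotonicity of $\beta_A$ hold uniformly across the whole relevant range of birth times, so that the single bound $\beta_{\max}$ is legitimate.
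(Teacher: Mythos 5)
Your proposal is correct and follows essentially the same route as the paper's proof: a union bound over ordered pairs, the geometric observation that a long edge forces the sphere of influence to have radius exceeding $\lambda$ (your $j_{\max}$ is the paper's critical time $m_1$), the bound on $\beta_A$ by the infectiousness of the youngest relevant vertex, and the same final exponent $\phi d(A_1+2)/(1-A_1)-A_1$. The only cosmetic differences are that you linearize $1-e^{-x}\le x$ up front where the paper expands the exponential asymptotically at the end, and you bound the geometric factor by $1$ rather than by $A_2$; neither changes the argument.
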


\begin{proof}
Let $L$ represent the event of there being an edge in $I_n$ where the distance between its endpoints is greater than $\lambda$. We will call such edges ``long" and all other edges ``short". Given two (not necessarily connected) nodes in $V_n$, $v_i$ and $v_j$, let $L_{v_i,v_j}$ represent the event of there being a long edge from $v_i$ to $v_j$ in $I_n$. Thus, $L_{v_i,v_j}$ occurs if $v_i$ and $v_j$ (the vertices born at time $i$ and $j$, respectively) have distance at least $\lambda$, there is an edge between $v_i$ and $v_j$, and the infection can travel from $v_i$ to $v_j$.  Since $L=\bigcup_{i=0}^{n-1}\bigcup_{j=i+1}^{n} \left(L_{v_i,v_j}\cup L_{v_j,v_i}\right)$, by taking the union bound, we know
\[
	\mathbb{P}(L)\leq \sum\limits_{i=0}^{n-1}\sum\limits_{j=i+1}^{n} \mathbb{P}(L_{v_i,v_j})+\mathbb{P}(L_{v_j,v_i})
\]
Our proof will show that this double sum goes to $0$ as $n$ approaches infinity.

\par

We first need an expression for $\mathbb{P}(L_{v_i,v_j})+\mathbb{P}(L_{v_j,v_i})$. Since $i<j$, by the definition of the potential infection graph, $L_{v_i,v_j}$ occurs if and only if three other events also occur: $d(v_i,v_j) > \lambda$, $v_j \in S(v_i,j)$, and $I_{v_i,v_j}=1$. In other words, for there to be a long edge between $v_i$ and $v_j$, $I_{v_i,v_j}$ must equal $1$ and $v_j$ must be far enough away from $v_i$ to be considered long, but close enough to be in the sphere of influence of $v_i$ at time $j$. 

\par

Since $v_j$ is placed uniformly at random in the hypercube, the distance $d(v_j,v_i)$ and the event $I_{v_i,v_j}=1$ are independent. Therefore, for any specific values for $j$ and $i$. $i<j$,  we can write

\begin{align*}
	\mathbb{P}(L_{v_i,v_j}) &= \mathbb{P}(d(v_i,v_j) > \lambda, v_j \in S(v_i,j), I_{v_i,v_j}=1) \\
			      &= \mathbb{P}(d(v_i,v_j) > \lambda, v_j \in S(v_i,j))\mathbb{P}(I_{v_i,v_j}=1)
\end{align*}

\par
For the edge oriented in the opposite direction, we can make a similar argument. Hence, we can write

\begin{align*}
	\mathbb{P}(L_{v_j,v_i}) &= \mathbb{P}(d(v_i,v_j) > \lambda, v_j \in S(v_i,j), I_{v_j,v_i}=1) \\
			      &= \mathbb{P}(d(v_i,v_j) > \lambda, v_j \in S(v_i,j))\mathbb{P}(I_{v_j,v_i}=1)
\end{align*}

Combining our expressions for $\mathbb{P}(L_{v_i,v_j})$ and $\mathbb{P}(L_{v_j,v_i})$ we find that

\[
	\mathbb{P}(L_{v_i,v_j})+\mathbb{P}(L_{v_j,v_i}) = \mathbb{P}(d(v_i,v_j) > \lambda, v_j \in S(v_i,j))(\mathbb{P}(I_{v_i,v_j}=1)+\mathbb{P}(I_{v_j,v_i}=1))
\]

 We know $\mathbb{P}(I_{v_i,v_j}=1)=\beta_A(v_i)$ and $\mathbb{P}(I_{v_j,v_i}=1)=\beta_A(v_j)$ from equation (3), but we need expressions for $\mathbb{P}(d(v_i,v_j) > \lambda, v_j \in S(v_i,j))$ and $\mathbb{P}(I_{v_i,v_j}=1)$. 

We use a geometric argument to find $\mathbb{P}(d(v_j,v_i) > \lambda, v_j \in S(v_i,j))$, which is the probability of there being a long edge between $v_i$ and $v_j$ in the original network. There are three cases. In the first case, the sphere of influence of $v_i$ has radius smaller than $\lambda$ at its time of birth. Since spheres of influence only shrink, in this case there will never be a time when $v_j$ can both fall within $v_i\text{'s}$ sphere of influence and be greater than $\lambda$ away from $v_i$. This case occurs when $i$ exceeds a critical value $m$, which is the first time when vertices are born with a sphere of influence that  has radius smaller than $\lambda$.

In the second case, $i$ is smaller than the critical value $m$, but $j$ is larger than the second critical value $m_i$. This critical value is reached when the radius of $v_i\text{'s}$ sphere of influence equals $\lambda$. Again, since spheres of influence only shrink, vertices born at times after $m_i$ cannot have $d(v_j,v_i) > \lambda$ and  $v_j \in S(v_i,j))$. In these first two cases, $\mathbb{P}(d(v_j,v_i) > \lambda, v_j \in S(v_i,j))=0$. 

A ball of radius $\lambda$ has volume $\lambda^dc_p$ where $c_p$ depends on our $L_p$ norm. 
Using this, we find that
\begin{equation*}
m=\frac{A_2}{\lambda^dc_p}
\qquad
m_i=\left(\frac{A_2}{i\lambda^dc_p}\right)^{\frac{1}{1-A_1}}
\end{equation*}

In the last case, $i<m$ and $j<m_i$, which means $d(v_j,v_i) > \lambda$ and  $v_j \in S(v_i,j))$ is possible. Since $v_j$ is placed in the hypercube uniformly at random and the hypercube has unit volume, $\mathbb{P}(d(v_j,v_i) > \lambda, v_j \in S(v_i,j))$ is equal to the volume of the spherical shell between the sphere of influence and the ball centred at $v_i$ with radius $\lambda$ which we denote $B(v_i,\lambda)$. Hence, in this case, $\mathbb{P}(d(v_j,v_i) > \lambda, v_j \in S(v_i,j))=|S(v_i,j)|-|B(v_i,\lambda)|$.

\par

Combining the results from the previous paragraphs,

\[
\mathbb{P}(L) \leq \sum\limits_{i=0}^{m}\sum\limits_{j=i+1}^{m_i} (|S(v_i,j)|-|B(v_i,\lambda)|)(\beta_A(v_i) +\beta_A(v_j)) 
\]

Since the oldest vertex will always have the largest sphere of influence, we know that $m_i < m_1$ for all $i \in [1,m]$ and that $m<m_1$. Also, $A_2 = |S(v_1,1)| > |S(v_i,j)-B(v_i,\lambda)|$ for all $i,j \in [1,m_1]$. Finally, since $v_{m_1}$ has the lowest expected degree of all vertices born at or before time $m_1$, $\beta_A(v_{m_1}) > \beta_A(v_i),\beta_A(v_j)$ for all $i,j \in [1,m_1]$. Hence, we can write

\begin{align}
	\mathbb{P}(L) &\leq  \sum\limits_{i=1}^{m_1}\sum\limits_{j=i+1}^{m_1} 2A_2\beta_A(v_{m_1}) \\
			      &\leq 2\left(1-\exp\left(-\frac{\tau T}{\frac{A_2}{A_1}\left(\left(\frac{n}{{m_1}}\right)^{A_1}-1\right)}\right)\right)A_2m_1^2 
\end{align}
From the formula for $m_1$, we see that $m_1 \sim n^{\frac{\phi d}{1-A_1}}$. Setting $\phi = \frac{A_1(1-A_1)}{(A_1+2)d}$ and $\gamma=\tau T$, we see that

\begin{align}
	\mathbb{P}(L) &\leq 2\left(1-\exp\left(-\frac{\gamma}{\left(\left(\frac{n}{{m_1}}\right)^{A_1}-1\right)}\right)\right)A_2m_1^2 \\
			      &\sim 2\left(1-\exp\left(-\gamma n^{A_1\left(\frac{\phi d}{1-A_1}-1\right)}\right)\right)A_2n^{\frac{2\phi d}{1-A_1}} \\
			      &\sim 2\left(1-\left(1-\gamma n^{A_1\left(\frac{\phi d}{1-A_1}-1\right)} + O\left(n^{2A_1\left(\frac{\phi d}{1-A_1}-1\right)}\right)\right)\right)A_2n^{\frac{2\phi d}{1-A_1}} \\
			      &= o(1)
\end{align}
\qed
\end{proof}

Using this lemma, we can now prove our main result, Theorem 4.

\begin{proof}
Let B represent the bad event where there is a node $v$ in the infection digraph at time $t$, $I_n^t$, where $d(v,u)>t\lambda$. If B occurs, then there is a path from $v$ to $u$ with at most $t$ edges because $I_n^t$ is the $t^\text{th}$ neighbourhood of $u$ in $I_n$. By the triangle inequality, at least one of the edges in the path from $v$ to $u$ has a length greater than $\lambda$ and, more generally, there is an edge in the potential infection graph with a length greater than $\lambda$. Let $L$ represent the event of there being an edge in $I_n$ where the distance between its endpoints is greater than $\lambda$. Since $B \subset L$, $\mathbb{P}(B) \leq \mathbb{P}(L)$, but by the previous lemma, a.a.s. $\mathbb{P}(L) = 0$.
\qed
\end{proof}

\subsection{Conjecture for Scenario B}
We conjecture that in scenario B, the negation of Lemma 1 holds. We know that the modified SPA model a.a.s. has edges greater than length $\lambda'$ where $\lambda'=\mu n^{-\theta}$ with $\theta>1-\frac{A_1}{4A_1+2}$ and $\mu$ constant. We conjecture that the potential infection graph will have long edges as well.
\newtheorem{Conjecture}{Conjecture}
\begin{Conjecture}
Let $G_n$ be a graph with $n$ vertices generated by the modified SPA model and $I_n$ be a potential infection graph on $G_n$ in scenario B. There exists $\phi>0$ such that if we let $\lambda=n^{-\phi}$, a.a.s. $I_n$ contains an edge of length greater than $\lambda$. 
\end{Conjecture}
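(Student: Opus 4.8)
The plan is to prove the conjecture by the second moment method applied to the number of long edges in the potential infection graph $I_n$. Let $N$ denote the number of unordered pairs $\{v_i,v_j\}$ joined by an edge of length greater than $\lambda$ in $I_n$: pairs for which $d(v_i,v_j)>\lambda$, the younger vertex falls in the older vertex's sphere of influence, and at least one of the two orientations is occupied by the infection. It suffices to show that $\mathbb{E}(N)\to\infty$ and $\operatorname{Var}(N)=o(\mathbb{E}(N)^2)$, since Chebyshev's inequality then gives $\mathbb{P}(N=0)\le \operatorname{Var}(N)/\mathbb{E}(N)^2\to 0$, so that a.a.s. $N\ge 1$. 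The essential difference from scenario A is that the thinning probability is now the constant $q=1-(1-\beta_B)^2$ rather than the degree-dependent $\beta_A(v)$ that decayed on the old, high-degree vertices responsible for long edges; this is precisely what prevents the first moment from vanishing.

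For the first moment I would reuse the geometric analysis of Lemma~\ref{lem:NoLongEdges}. Writing $X_{ij}$ for the indicator that $\{v_i,v_j\}$ (with $i<j$) is a geometric long edge, the same case analysis gives $\mathbb{P}(X_{ij}=1)=|S(v_i,j)|-|B(v_i,\lambda)|$ for $i<m$ and $j<m_i$, and zero otherwise. Since the infection Bernoullis are independent of the vertex positions, $\mathbb{E}(N)=q\sum_{i<m}\sum_{i<j<m_i}\bigl(|S(v_i,j)|-\lambda^d c_p\bigr)$. Restricting the inner sum to those $j$ for which $|S(v_i,j)|\ge 2\lambda^d c_p$, so that the shell is at least half of $|S(v_i,j)|$, and approximating by an integral as in the proof of Theorem~2, one obtains a lower bound of order $n^{c\phi}$ for a positive constant $c$ depending on $A_1$ and $d$. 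Hence $\mathbb{E}(N)\to\infty$ for suitable $\phi>0$; in particular one may choose $\phi$ inside the interval $\bigl(0,\tfrac{A_1(1-A_1)}{(A_1+2)d}\bigr)$ of Theorem~\ref{thm:main}, exhibiting a genuine contrast with scenario A.

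The variance is controlled by a pairwise-independence observation. Because the vertices are placed uniformly on the torus and the metric is translation invariant, the probability that a vertex lands in a fixed spherical shell around another vertex does not depend on the latter's position. Consequently, for two distinct geometric long-edge indicators sharing a common endpoint $v_k$, conditioning on the position of $v_k$ makes them independent with conditional probabilities that do not depend on $v_k$; the shell volumes factor and the indicators are pairwise independent. Pairs sharing no vertex are independent outright, and the infection Bernoullis attached to distinct ordered pairs are independent of each other and of the positions. Therefore $N$ is a sum of pairwise-independent indicator terms, so $\operatorname{Var}(N)\le\mathbb{E}(N)$, which is $o(\mathbb{E}(N)^2)$ once $\mathbb{E}(N)\to\infty$.

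I expect the main obstacle to be the first-moment lower bound rather than the variance: one must verify that the restricted double sum genuinely diverges, tracking the dependence of $m_i$ and of the resulting integral on $A_1$ (the behaviour of $\sum_i i^{-A_1/(1-A_1)}$ differs according to whether $A_1$ exceeds $\tfrac12$), and must handle the boundary cases where $|S(v_i,j)|$ meets the cap of $1$ or where $m_i>n$, neither of which should do more than change constants. A secondary point requiring care is making the translation-invariance argument for pairwise independence fully rigorous under the torus metric, including the edge case of spheres that cover the whole cube.
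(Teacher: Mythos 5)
The paper does not prove this statement---it is posed as a conjecture and supported only by simulations---so there is no proof of record to compare yours against; I can only assess your proposal on its own terms, and it looks essentially sound and complete in outline. The two load-bearing observations are both correct. First, what kills the first moment in Lemma~\ref{lem:NoLongEdges} is the factor $\beta_A(v_i)+\beta_A(v_j)$, which is polynomially small precisely because long edges must be incident to old vertices of large expected degree; in scenario B this factor becomes the constant $q=1-(1-\beta_B)^2$, so the geometric count of long edges survives the thinning. Second, pairwise independence genuinely holds in the \emph{modified} SPA model because the spheres of influence are deterministic functions of the birth times: for two long-edge indicators sharing the older vertex, conditioning on its position factorizes them with conditional probabilities equal to shell volumes, which are position-independent by translation invariance; for two indicators sharing the \emph{younger} vertex $v_j$, integrating out $v_j$ first and then the other old vertex gives $\mathbb{E}[\operatorname{vol}(A\cap A')]=\operatorname{vol}(A)\operatorname{vol}(A')$ by Fubini and the symmetry $x\in A(y)\Leftrightarrow y\in A(x)$ of the torus metric. (This is exactly where the argument would break for the unmodified SPA model, whose spheres depend on realized in-degrees.) Given pairwise independence, $\operatorname{Var}(N)\le\mathbb{E}(N)$ and Chebyshev finishes.

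On the step you flag as the main risk: the first moment does diverge, and the computation is short. Restricting to $j\le 2^{-1/(1-A_1)}m_i$ so that the shell has volume at least $\tfrac12|S(v_i,j)|$, the inner sum is $\Theta\bigl(i^{-A_1}m_i^{A_1}\bigr)=\Theta\bigl(L^{-A_1/(1-A_1)}\,i^{-A_1(2-A_1)/(1-A_1)}\bigr)$ with $L=\lambda^dc_p$. The critical exponent for the outer sum over $i$ is $A_1(2-A_1)/(1-A_1)=1$, i.e.\ $A_1=(3-\sqrt5)/2$ rather than $1/2$; but in either regime one obtains $\mathbb{E}(N)=\Omega\bigl(L^{-c}\bigr)$ with $c=\max\{A_1/(1-A_1),\,1-A_1\}\ge(\sqrt5-1)/2$, a positive power of $n^{\phi d}$ for every $\phi>0$ small enough that $m_1\le n$ (which holds throughout the range of Theorem~\ref{thm:main}). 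The capped case $|S(v_i,j)|=1$ only increases the shell volume, so it helps rather than hurts. The one remaining technicality is that $\beta_B$ as defined uses the realized average degree $\langle\deg^-\rangle$, which is random; conditioning on its a.a.s.\ concentration around $A_2/(1-A_1)$ keeps $q$ bounded below by a constant without disturbing the independence structure. In short, your argument appears to upgrade the paper's conjecture to a theorem.
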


\section{Simulations}

Using simulations, we test our theoretic result that the infection in scenario A will not make long jumps. We also use simulations to provide evidence for our conjecture that, in scenario B, the infection can make long jumps if we pick the origin vertex correctly. Recall that in both infection scenarios, we can vary how easily the infection spreads by altering $T$, the total number of contacts made per time step, and $\tau$, the probability of transmission per contact. Also recall that we denote $\gamma = \tau T$. For the 2 infection scenarios, we consider 3 levels of contagiousness: $\gamma=1$, $\gamma=10$ and $\gamma=100$. 

We generated 10 networks with the modified SPA network in $\mathbb{R}^1$ with $A_1=0.5$ and $A_2=1$. Our results are highly asymptotic and the bound is lowest in low dimensions so, due to computational constraints, we choose to simulate in $\mathbb{R}^1$. The 10 networks are of increasing size, beginning at $n=1000$ and increasing by increments of $1000$ to a maximum of $n=10000$. 

For each network, we run each of the 6 infection processes 50 times. We chose to begin the infections at the oldest vertex because it has the highest likelihood of having neighbours far away in the metric space. On one hand, we want to give the infection in scenario B the opportunity to make long jumps and, on the other hand, we do not want to mistakenly conclude that the infection in scenario A makes short jumps only because it was never exposed to long edges. While our main result states that given a number of time steps, a.a.s. the infection remains within a certain region, this result depends on both the size of the network and the current time step in the infection process. We thought it would be more clear to compare Lemma 1 to our simulations, which states that the edge length taken by the infection in scenario A is bounded. To compare scenario A and B, we likewise observe the maximum edge length the infection in scenario B takes.

\begin{figure}
\vspace{0cm}
  \includegraphics[width=\linewidth]{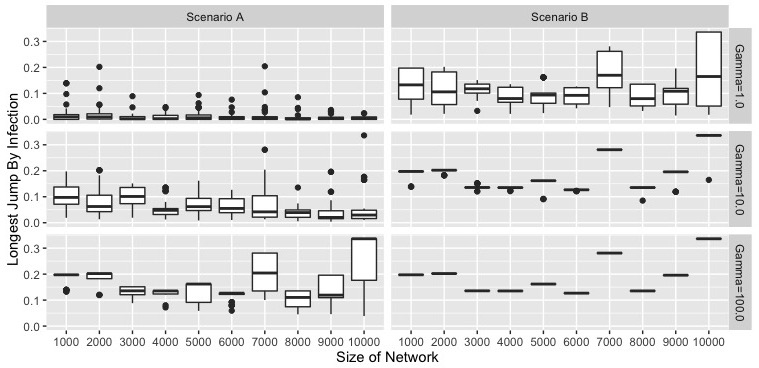}
  \caption{Longest jump made by the infection vs. network size in scenario A and B stratified by 3 levels of contagion.}
  \label{fig:SimulationGraph}
\end{figure}

The results of our simulations are shown in Figure 1. We make two conclusions from our simulations. First, we see that in scenario A, when high degree nodes are less contagious, the infection takes shorter jumps than in scenario B. As indicated by our asymptotic result, the difference becomes more pronounced in larger networks. One might notice that in scenario B the infection does not always make long jumps, which seems to contradict our conclusion. These outliers can be explained by recognizing that the longest edge in the entire network is a non-clustered random variable. Vertices receive long edges in a brief period during the early steps of the model and, consequently, whether there are long edges in the network at all is highly variable. It is not that the infection avoids the long edges, but rather, that the infection has no long edges to take in the first place. Second, we see that this difference between scenario A and scenario B becomes less pronounced when we make the infection more contagious. Again, this matches our theoretic result, since our bound on the probability of long infection increases when $\gamma$ is larger. We expect that if we could generated large enough networks, eventually we would see the difference between scenario A and B reemerge, even at high levels of contagion.

\begin{figure}
\vspace{0cm}
  \includegraphics[width=\linewidth]{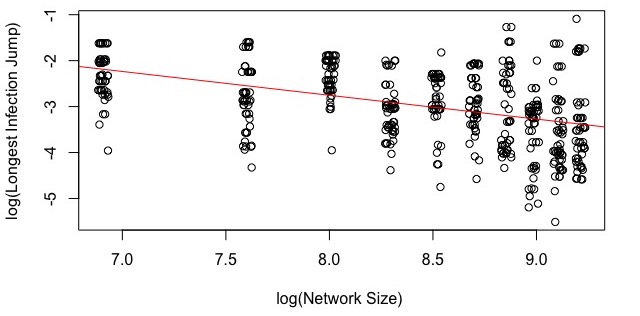}
  \caption{A log-log plot of the longest jump made by the infection vs. network size in scenario A with $\gamma=10$. The equation for the regression line is log(y)=-0.51log(x)+1.40 ($R^2=0.19$).}
  \label{fig:LogLogPlot}
\end{figure}

To compare our analytic bound $\phi$ to our simulations, we perform a linear regression on a log-log plot of longest jump vs. network size for infections in Scenario A with $\gamma=10$ (see Figure 2). The simulations are the same as those represented in Figure 1. We have added a small amount of noise to the x-values in order to make the distribution of data more clear. If our result is true, then we should expect that the slope of the regression line should be less than $-\frac{A_1(1-A_1)}{(A_1+2)d}=-0.1$. We found the slope to be $-0.51$, which provides support for our lemma. Of course, our data is highly variable and this plot only gestures towards the fact that our bound is valid. We expect that for larger networks, this variation would decrease. 

To illustrate that the infection spreads slowly through the feature space in scenario A, we simulate one run of the process on a graph generated by the modified SPA model in $\mathbb{R}^2$ with $A_1=0.5$, $A_2=1$, $\gamma=10$, and $n=1000$. We present the simulation in Figure 3. The blue vertices were infected earlier in the process and red vertices later. Since nearby vertices have similar colours (recall that we are using the torus metric), this simulation provides additional qualitative evidence for our finding that the infection does not make long jumps in scenario A.
\begin{figure}
\vspace{0cm}
  \includegraphics[width=\linewidth]{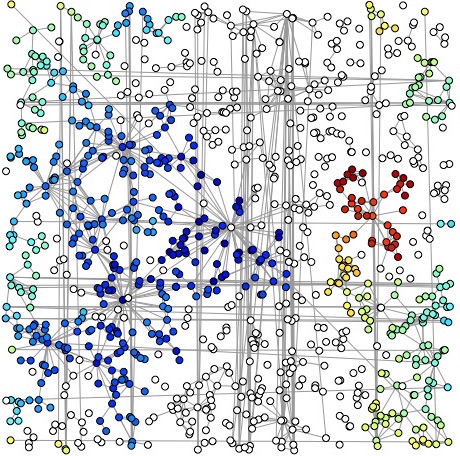}
  \caption{A scenario A infection with $\gamma=10$ on a modified SPA network with $A_1=0.5$, $A_2=1$, and $n=1000$. The gradient from blue to red represents earlier to later infections.}
  \label{fig:SimulationGraph}
  \setlength{\belowcaptionskip}{-10pt}
\end{figure}
\section{Conclusion}
When modelling contagious processes, it is important to take contacts made per neighbour into account. With analytic and numeric results, we show that if all vertices make an equal number of contacts, the infection will spread through communities rather than jumping between them. High degree vertices are more likely to have neighbours in distant communities and when these popular individuals are less contagious, the infection is less likely to spread from one community to another. We also show with simulations that scaling the number of contacts a vertex makes by its degree results in an epidemic that spreads irrespective of the communities in the network.

In addition to our conjecture, we identify two areas of future research. First, since infections in scenario A and scenario B behave differently with respect to community structure, interventions may benefit from exploiting this distinction. In other words, if we know a contagious process will spread through communities, how can we use this fact to control the epidemic? Likewise, how should we control diseases that jump between communities? The second area of potential research is studying scenario A infections further. While we find that these types of contagious processes will be a.a.s. bounded by a ball with a growing radius, we do not discuss how this may affect the success of an infection spreading through a network. If a disease does not jump, does community structure prevent the infection from spreading beyond its original group?
\bibliographystyle{splncs03}
\bibliography{SPARef}

\begin{thebibliography}{10}
\providecommand{\url}[1]{\texttt{#1}}
\providecommand{\urlprefix}{URL }

\bibitem{Aiello}
Aiello, W., Bonato, A., Cooper, C., Janssen, J., Pralat, P.: {A Spatial Web
  Graph Model with Local Influence Regions}. Internet Mathematics  5(12),
  173--193 (2007)

\bibitem{Barabasi1999}
Barab{\'{a}}si, A.L., Albert, R.: {Emergence of Scaling in Random Networks}.
  Science  286(5439) (1999)

\bibitem{Cooper2014}
Cooper, C., Frieze, A., Pra{\l}at, P.: {Some Typical Properties of the Spatial
  Preferred Attachment Model}. Internet Mathematics  10(1-2),  116--136 (2014)

\bibitem{Flaxman2006}
Flaxman, A.D., Frieze, A.M., Vera, J.: {A Geometric Preferential Attachment
  Model of Networks}. Internet Mathematics  3(2),  187--206 (2006)

\bibitem{Garnett1996}
Garnett, G.P., Anderson, R.M.: {Sexually Transmitted Diseases and Sexual
  Behavior: Insights from Mathematical Models}. The Journal of Infectious
  Diseases  174(2),  S150--S161 (1996)

\bibitem{Hoff2002}
Hoff, P.D., Raftery, A.E., Handcock, M.S.: {Latent Space Approaches to Social
  Network Analysis}. Journal of the American Statistical Association  97(460),
  1090--1098 (2002)

\bibitem{Jacob2015}
Jacob, E., Morters, P.: {Spatial preferential attachment networks: Power laws
  and clustering coefficients}. Annals of Applied Probability  25(2),  632--662
  (2015)

\bibitem{Janssen2012}
Janssen, J., Hurshman, M., Kalyaniwalla, N.: {Model Selection for Social
  Networks Using Graphlets}. Internet Mathematics  8(4),  338--363 (2012)

\bibitem{Janssen2015}
Janssen, J., Mehrabian, A.: {Rumours Spread Slowly in a Small World Spatial
  Network}. Algorithms and Models for the Web Graph pp. 107--118 (2015)

\bibitem{Janssen2013}
Janssen, J., Pralat, P., Wilson, R.: {Geometric graph properties of the spatial
  preferred attachment model}. Advances in Applied Mathematics  50(2),
  243--267 (2013)

\bibitem{Lu2006}
Lu, L., Chung, F.: {Old and new concentration inequalities}. In: Complex Graphs
  and Networks, chap.~2, pp. 23--56. American Mathematical Society, Providence
  (2006)

\bibitem{Newman2015}
Newman, M.E.J., Peixoto, T.P.: {Generalized communities in networks}. Physical
  Review Letters  115(8) (2015)

\bibitem{Nordvik2006}
Nordvik, M.K., Liljeros, F.: {Number of Sexual Encounters Involving Intercourse
  and the Transmission of Sexually Transmitted Infections}. Sexually
  Transmitted Diseases  33(6),  342--349 (2006)

\bibitem{Salathe2010}
Salath{\'{e}}, M., Jones, J.H., May, R., Johnson, A., Auranen, K.: {Dynamics
  and Control of Diseases in Networks with Community Structure}. PLoS
  Computational Biology  6(4),  e1000736 (2010)

\end{thebibliography}
\end{document}